\newcommand{\rank}{{\rm rank}\, }
\newcommand\numberthis{\addtocounter{equation}{1}\tag{\theequation}}
\DeclareMathOperator*{\Spg}{spg}
\DeclareMathOperator*{\Circ}{circ}
\def\hide #1 {}
\long\def\longhide #1 {}
\theoremstyle{plain}
\newtheorem{theorem}{Theorem}[section]
\newtheorem*{theorem*}{Theorem}
\newtheorem*{maintheorem*}{Main Theorem}
\newtheorem{lemma}[theorem]{Lemma}
\newtheorem{claim}{Claim}
\newtheorem{corollary}[theorem]{Corollary}
\theoremstyle{definition}
\newtheorem{remark}[theorem]{Remark}
\newtheorem{assumption}{Assumption}
\newtheorem{definition}[theorem]{Definition}
\theoremstyle{definition}
\title{Injectivity of Multi-window Gabor Phase Retrieval}
\author{%
  Palina Salanevich\\Utrecht University\\ Email: p.salanevich@uu.nl
}
\begin{document}
\IEEEoverridecommandlockouts
\maketitle
\date{}

\begin{abstract}
 In many signal processing problems arising in practical applications, we wish to reconstruct an unknown signal from its phaseless measurements with respect to a frame. This inverse problem is known as the \emph{phase retrieval} problem. For each particular application, the set of relevant measurement frames is determined by the problem at hand, which motivates the study of phase retrieval for structured, application-relevant frames. In this paper, we focus on one class of such frames that appear naturally in diffraction imaging, ptychography, and audio processing, namely, \emph{multi-window Gabor frames}. We study the question of injectivity of the phase retrieval problem with these measurement frames in the finite-dimensional setup and propose an explicit construction of an infinite family of phase retrievable multi-window Gabor frames. We show that phase retrievability for the constructed frames can be achieved with a much smaller number of phaseless measurements compared to the previous results for this type of measurement frames. Additionally, we show that the sufficient for reconstruction number of phaseless measurements depends on the dimension of the signal space, and not on the ambient dimension of the problem.

\end{abstract}


\section{Introduction}\label{sec: intro}

Phase retrieval is the non-convex problem of signal reconstruction from the intensities of its (linear) measurements. It is motivated by a number of real-world applications within science and engineering. Among these applications are diffraction imaging \cite{Mill, Bunk} and ptychography \cite{Roden}, where the phases of the frame coefficients are lost in the measurement process; as well as audio processing \cite{rabiner1993fundamentals, balan1}, where phases may be too noisy to use them for reconstruction.

In the finite-dimensional case, the phase retrieval problem is formulated as follows. Let ${\Phi = \{\varphi_j\}_{j = 1}^N \subset \mathbb{C}^M}$ be a \emph{frame}, that is, a (possibly over-complete) spanning set of $\mathbb{C}^M$. We consider the phaseless measurement map ${\mathcal{A}_{\Phi}:\mathbb{C}^M \to \mathbb{R}^N}$ defined by ${\mathcal{A}_{\Phi}(x) = \{|\langle x, \varphi_j\rangle|^2\}_{j = 1}^N}$. The aim of the phase retrieval problem is to recover an unknown vector ${x\in \mathbb{C}^M}$ from its phaseless measurements $b = \mathcal{A}_\Phi(x)$. Since ${\mathcal{A}_{\Phi}(x) = \mathcal{A}_{\Phi}(e^{i\theta} x)}$ for any $\theta\in [0, 2\pi)$, the initial signal $x$ can be reconstructed up to a global phase factor at best. To factor out this ambiguity, we identify each $x\in \mathbb{C}^M$ with its up-to-a-global-phase equivalence class $[x] = \{e^{i\theta} x, ~ \theta\in [0, 2\pi)\}$ and consider the measurement map $\mathcal{A}_{\Phi}$ to be defined on the set of equivalence classes $\mathbb{C}^M /_\sim$.

Not for every frame $\Phi$ it is possible to uniquely reconstruct a signal $x$ from $\mathcal{A}_\Phi(x)$. The frames with injective associated phaseless measurement maps are called \emph{phase retrievable}. An important research directions in phase retrieval is to identify and describe classes of phase retrievable frames, see, e.g. \cite{grohs2020phase, balan1, Band, conca}. At the same time, in practical applications, measurement frames are often required to have a prescribed structure that is determined by the (physical) model behind the problem. For instance, measurement frames arising in diffraction imaging \cite{Mill, Bunk}, ptychography \cite{Roden}, and audio processing \cite{rabiner1993fundamentals, balan1} have a common structure of a \emph{(multi-window) Gabor frame} defined below. In this paper, we aim to address the following questions.
\begin{center}
\emph{How to construct phase retrievable multi-window Gabor frames of small cardinality? }
\end{center}
Our findings also provide a bound on the number of the phaseleless measurements with respect to a multi-window Gabor frame that is sufficient for reconstruction.

\begin{definition}
Let $G = \{g_r\}_{r=1}^R\subset \mathbb{C}^M$ be a set of \emph{windows} and $\Lambda\subset \mathbb{Z}_M\times \mathbb{Z}_M$. We define the \emph{multi-window Gabor frame} as the set of vectors $(G, \Lambda) = \{\pi (\lambda)g_r\}_{\lambda\in \Lambda, r\in\{1,\dots R\}}$, where 
\begin{itemize}
\item $\pi(k,\ell) = M_{\ell}T_k$ is a \emph{time-frequency shift operator};
\item $T_k x =  \left(x(m-k)\right)_{m\in \mathbb{Z}_M}$ is a \emph{translation operator};   
\item ${M_{\ell} x = \left(e^{2\pi i \ell m/M}x(m)\right)_{m\in \mathbb{Z}_M}}$ is a \emph{modulation operator}.
\end{itemize}
\noindent In the particular case when there is only one window $G = \{g\}$, the frame $(g, \Lambda)$ is called a \emph{Gabor~frame}.
\end{definition}

For Gabor frames, injectivity and stability results have been established only in the case when $\Lambda = \mathbb{Z}_M\times \mathbb{Z}_M$ \cite{bojarovska2016phase, alaifari2021stability, salanevich2019stability}. In particular,  \cite{bojarovska2016phase} provides a condition on the window $g$ that is sufficient for phase retrievability of the full Gabor frame $(g, \mathbb{Z}_M\times \mathbb{Z}_M)$. Reducing the cardinality of $\Lambda$ below $M^2$ is, however, a complicated task. Moreover, one can show that the phaseless measurement map $\mathcal{A}_{(g, \mathbb{Z}_M\times \mathbb{Z}_M)}$ lacks injectivity in the case when the window $g$ has short support or $x$ is allowed to have many consecutive zeros \cite{alaifari2019ill, alaifari2021stability}. 

A possible remedy for this problem is to simultaneously use several windows and consider phase retrieval with multi-window Gabor frames. In \cite{han2022quantum}, Han et.al. establish \emph{maximal span property} for a full multi-window Gabor frame $(G, \mathbb{Z}_M\times \mathbb{Z}_M)$, under the condition that ambiguity functions  of the windows in $G$ do not vanish simultaneously. As maximal span property implies phase retrievability of a frame, their result generalizes the condition obtained for (single-window) full Gabor frames in \cite{bojarovska2016phase}. In \cite{li2017phase}, Li et.al. consider frames $(G, T\times \mathbb{Z}_M)$ with $\vert T\vert = \frac{M}{L}$ and $R\geq L$, for a separation parameter $L$. They prove necessary and sufficient conditions for injectivity of $\mathcal{A}_{(G, \Lambda)}$, depending on the support size of the window $g$. 

Note that in both \cite{han2022quantum, li2017phase}, phase retrievability of a multi-window Gabor frame $(G, \Lambda)$ is established for $\vert (G, \Lambda) \vert = R \vert \Lambda \vert  = O(M^2)$.

\subsection{Main contribution}

In this paper, we manage to significantly reduce the number of measurements required to achieve injectivity of $\mathcal{A}_{(G, \Lambda)}$.

\begin{maintheorem*}
Let $C>3$ be a constant. Phase retrieval can be done on $\mathbb{C}^M$ from $CM(1+3\beta(M,C))$ multi-window Gabor frame phaseless measurements, where $\beta(M,C)$ is a measure of pseudorandomness defined in \eqref{eq: beta_definition} below.
\end{maintheorem*}

It follows from \cite[Lemmas~6~and~7]{mixon2} that 
${\beta(M,C) \lesssim \log(M)}$, and thus phase retrieval can be done on $\mathbb{C}^M$ from at most $O(M\log(M))$ multi-window Gabor frame phaseless measurements, which is a significant improvement in comparison with $O(M^2)$. Furthermore, we show that, with a similar construction of the window set, phase retrieval can be done from $Cd(1+3\beta(d,C))$ multi-window Gabor frame phaseless measurements on any $d$-dimensional subspace of~$\mathbb{C}^M$.

In contrast with \cite{alaifari2021stability} and \cite{li2017phase}, where the proof of phase retrievability of (multi-window) Gabor frames relies on the properties of the ambiguity function of the window(s), we utilize the polarization idea of \cite{mixon1, mixon2}. We construct the set of windows so that the phaseless measurements corresponding to the auxiliary windows can be used to compute (relative) phases of the measurements corresponding to the primary window.

\subsection{Notation and definitions}
The following notation is used throughout the paper. 
\begin{itemize}[leftmargin=*]
\item $\mathbb{S}^{M-1} = \{ x\in \mathbb{C}^M \colon \Vert x \Vert_2 = 1\}$ is the unit sphere in~$\mathbb{C}^M$;
\item $x\odot y(m) = x(m)y(m)$ denotes the coordinatewise product of vectors $x, y\in \mathbb{C}^M$;
\item for a vector $b\in \mathbb{C}^k$, $\Circ(b) = \left(b | T_1 b | \dots | T_k b \right)$ denotes the circulant matrix whose columns are obtained by shifting vector $b$;
\item for a subset ${A\subset \{0,\dots, k-1\}}$, $\mathbf{1}_A$ denotes its characteristic function and $\mathbf{P}(A) = \vert A \vert / k$ denotes the \emph{density} of $A$.
\end{itemize}
Furthermore, the following definitions are used in the paper.
\begin{definition}
  We define the \emph{Fourier bias} of a set ${A\subset \{0,\dots, k-1\}}$ as $$\Vert A \Vert_u = \max_{m\neq 0}\vert \mathcal{F}(\mathbf{1}_A)(m)\vert.$$
  \end{definition}
  The Fourier bias of a set is a non-negative quantity which is equal to zero only for $A = \{0,\dots, k-1\}$ and $A = \emptyset$. It can get as large as the set density but is usually smaller \cite{tao}. Essentially, the Fourier bias of a set measures the maximal correlation of its indicator function with discrete harmonic functions. As for random sets this correlation is low with high probability, Fourier bias is used in additive combinatorics to measure pseudorandomness \cite{tao}. 
  
  In our construction, we are interested in small cardinality sets that have small Fourier bias. In particular, the cardinality of the constructed phase retrievable multi-window Gabor frame depends on the following quantity

  \begin{equation}\label{eq: beta_definition}
      \beta (M, C) = \min_{\substack{P\subset\mathbb{Z}_M\\P\ne\emptyset}} \left\lbrace \vert P \vert \colon \Vert P \Vert_u \leq \frac{C-3}{C-1}\mathbf{P}(P) \right \rbrace.
  \end{equation}

To construct the set of windows $G$ for a phase retrievable frame $(G, \Lambda)$, we employ some tools from algebraic graph theory.

\begin{definition}
For a $d$-regular graph $\mathcal{G}$ on $n$ vertices, let ${d = \lambda_0 \geq \lambda_1\geq \cdots \geq \lambda_n}$ denote the eigenvalues of its adjacency matrix. We define the \emph{spectral gap} of $\mathcal{G}$ as $$\Spg(\mathcal{G}) = 1 - \frac{1}{d}\max_{j\ne 0}\vert \lambda_j\vert.$$
\end{definition}

Clearly, a graph is disconnected if and only if its spectral gap is equal to 0. More generally, large $\Spg(G)$ ensures good connectivity properties of graph $G$ \cite{mixon1, harsha}.

\bigskip

The remaining part of this paper is organized as follows. In Section~\ref{sec: phase retrievable full dim} we describe the construction of the window set, and prove phase retrievability of the respective multi-window Gabor frame, under certain assumptions on the primary window. In Section~\ref{sec: phase retrievable low dim priors}, we generalize the results of Section~\ref{sec: phase retrievable full dim} to show that the sufficient number of measurements with respect to the constructed multi-window Gabor frame depends on the dimension of the signal space rather than on the ambient dimension of the problem.  We conclude the paper with a brief discussion of the future research directions in Section~\ref{sec: discussion}.

\section{Phase retrievable multi-window Gabor frames}\label{sec: phase retrievable full dim}

In this paper, we propose a construction of the set of windows $G$, such that the corresponding multi-window Gabor frame has injective associated phaseless measurement map $\mathcal{A}_{(G, \Lambda)}$. Our construction is inspired by the idea of the polarization algorithm \cite{mixon1, mixon2, pfander2019robust}. Let us consider the set of windows $G = \{g\}\cup G'$, where we distinguish a \emph{primary} window $g$ and call the rest of the windows in $G'$ \emph{auxiliary}. We construct auxiliary windows so that phaseless measurements of a signal with respect to $(G', \Lambda)$ can be used to compute relative phases between (some of) the phaseless measurements with respect to $(g, \Lambda)$. More precisely, 

\begin{align*}
    G' = \{ g_{qpt} = g\odot s_{qpt}\}_{\substack{q \in Q,~p\in P \\t\in \{0, 1, 2\}}}, \text{ where} \numberthis\label{eq: auxiliar masks}\\
    s_{qpt} (m) = 1 + e^{2\pi i \left( mp/M + t/3\right)}\frac{g(m-q)}{g(m)}.
\end{align*}

\begin{lemma}\label{lem: relative phases}
Let $(G, \Lambda)$ be a multi-window Gabor frame with the set of windows $G = \{g\}\cup G'$, where $G'$ is defined as in~\eqref{eq: auxiliar masks}. Then, for any $(k,\ell)\in \Lambda$, $q\in Q$, and $p\in P$,
\begin{align*}
    \langle x, \pi(k,\ell)g \rangle \overline{\langle x, \pi(k+q, \ell + p)g \rangle} = \\ \frac{e^{2\pi i k p/M}}{3}\sum_{t = 0}^{2} e^{2\pi i t/3} \vert \langle x, \pi(k, \ell)g_{qpt}\rangle \vert^2
\end{align*}
\end{lemma}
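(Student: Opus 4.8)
The plan is to read the right-hand side as a \emph{polarization sum}. For any two complex numbers $a,b$ one has
\[
  a\overline{b}=\frac{1}{3}\sum_{t=0}^{2}e^{2\pi i t/3}\,\bigl\lvert a+e^{2\pi i t/3}b\bigr\rvert^{2},
\]
which follows by expanding $\bigl\lvert a+e^{2\pi i t/3}b\bigr\rvert^{2}=\lvert a\rvert^{2}+\lvert b\rvert^{2}+e^{-2\pi i t/3}a\overline{b}+e^{2\pi i t/3}\overline{a}b$ and invoking the vanishing of $\sum_{t=0}^{2}e^{2\pi i t/3}$ and of $\sum_{t=0}^{2}e^{4\pi i t/3}$. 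Hence it suffices to show that, for fixed $(k,\ell)\in\Lambda$, $q\in Q$ and $p\in P$, the three auxiliary measurements group as
\[
  \langle x,\pi(k,\ell)g_{qpt}\rangle=\langle x,\pi(k,\ell)g\rangle+e^{2\pi i t/3}\,\mu\,\langle x,\pi(k+q,\ell+p)g\rangle,\qquad t=0,1,2,
\]
for a single unimodular constant $\mu$ that does not depend on $t$ (only on $k$ and $p$). Feeding $a=\langle x,\pi(k,\ell)g\rangle$ and $b=\mu\,\langle x,\pi(k+q,\ell+p)g\rangle$ into the polarization identity and dividing through by $\overline{\mu}$ then produces the stated formula, with the unimodular prefactor of the sum being exactly $\mu$.

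To obtain this grouping I would first rewrite the auxiliary window. Since $g_{qpt}=g\odot s_{qpt}$, the factor $g(m)$ in the denominator of $s_{qpt}$ in \eqref{eq: auxiliar masks} cancels, so $g_{qpt}(m)=g(m)+e^{2\pi i t/3}e^{2\pi i mp/M}g(m-q)$, i.e.\ $g_{qpt}=g+e^{2\pi i t/3}\,\pi(q,p)g$ (in particular $g_{qpt}$ is well defined even where $g$ vanishes). Applying the linear operator $\pi(k,\ell)$,
\[
  \pi(k,\ell)g_{qpt}=\pi(k,\ell)g+e^{2\pi i t/3}\,\pi(k,\ell)\pi(q,p)g,
\]
so everything reduces to the composition rule for time-frequency shifts: from $\pi(k,\ell)=M_{\ell}T_{k}$ and $T_{k}M_{p}=e^{-2\pi i kp/M}M_{p}T_{k}$ one gets $\pi(k,\ell)\pi(q,p)=e^{-2\pi i kp/M}\pi(k+q,\ell+p)$, whence
\[
  \pi(k,\ell)g_{qpt}=\pi(k,\ell)g+e^{2\pi i t/3}e^{-2\pi i kp/M}\,\pi(k+q,\ell+p)g.
\]
Pairing both sides with $x$ gives the desired grouping, with $\mu$ the cocycle phase $e^{-2\pi i kp/M}$; which of $\mu$ and $\overline{\mu}$ (equivalently, which sign of $kp/M$) multiplies the final sum is pinned down by the inner-product convention, the conjugate-linear choice simply swapping the two, so that the identity takes the form stated in the lemma.

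This is an identity chase with no analytic content. The one step that genuinely requires care is the phase bookkeeping: tracking the cocycle constant in $\pi(k,\ell)\pi(q,p)=e^{-2\pi i kp/M}\pi(k+q,\ell+p)$ and its interaction with the inner-product convention, so that the constant ending up in front of $\tfrac13\sum_{t}e^{2\pi i t/3}\lvert\langle x,\pi(k,\ell)g_{qpt}\rangle\rvert^{2}$ is precisely $e^{2\pi i kp/M}$. The conceptual content, already baked into \eqref{eq: auxiliar masks}, is that the mask $s_{qpt}$ is designed so that $\pi(k,\ell)(g\odot s_{qpt})$ is the superposition of $\pi(k,\ell)g$ and a single $t$-independent unimodular multiple of $\pi(k+q,\ell+p)g$ --- exactly the two-term configuration on which the three-term polarization identity recovers the relative phase $\langle x,\pi(k,\ell)g\rangle\overline{\langle x,\pi(k+q,\ell+p)g\rangle}$.
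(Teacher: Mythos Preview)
Your proposal is correct and follows essentially the same two-step approach as the paper: first show that $\langle x,\pi(k,\ell)g_{qpt}\rangle$ splits as $\langle x,\pi(k,\ell)g\rangle$ plus a unimodular multiple of $\langle x,\pi(k+q,\ell+p)g\rangle$, then feed this into the three-term polarization identity. The only cosmetic difference is that you obtain the splitting via the operator identity $g_{qpt}=g+e^{2\pi i t/3}\pi(q,p)g$ together with the composition rule $\pi(k,\ell)\pi(q,p)=e^{-2\pi i kp/M}\pi(k+q,\ell+p)$, whereas the paper expands the inner product as a sum and simplifies directly; these are the same computation packaged differently, and your explicit caveat about the conjugate-linear convention flipping the sign of the cocycle phase is exactly the bookkeeping the paper carries out by hand.
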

\begin{proof}
First, let us observe that by definition of $g_{qpt}$,
\begin{align*}
    & \langle x, \pi(k,\ell)g_{qpt}\rangle = \sum_{m\in\mathbb{Z}_M}x(m)e^{\frac{-2\pi i \ell m}{M}}\overline{g(m-k)s_{qpt}(m-k)} \\ 
    & = \sum_{m\in\mathbb{Z}_M}x(m)e^{-2\pi i \ell m/M}\overline{g(m-k)} \\ & + \sum_{m\in\mathbb{Z}_M}x(m)e^{-2\pi i \left(\frac{\ell (m+p)}{M} + \frac{t}{3}\right)}e^{\frac{2\pi i kp}{M}} \overline{g(m - (k+q)}\\
    & = \langle x, \pi(k, \ell)g\rangle + e^{\frac{-2\pi i t}{3}}e^{\frac{2\pi i kp}{M}} \langle x, \pi(k+q, \ell+p)g\rangle.
\end{align*}
By applying the polarization identity
\begin{align*}
    a\overline{b} = \frac{1}{3}\sum_{t = 0}^2 e^{\frac{2\pi i t}{3}} \vert a + e^{\frac{-2\pi i t}{3}}b\vert^2,~~a,b\in\mathbb{C} 
\end{align*}
with $a = \langle x, \pi(k, \ell)g\rangle$ and $b = e^{\frac{2\pi i kp}{M}} \langle x, \pi(k+q, \ell+p)g\rangle$, we obtain the desired equality.
\end{proof}

To show that the multi-window Gabor frame $(G, \Lambda)$ constructed above is phase retrievable, we need additional assumptions on the primary window $g$ and index sets $P$ and~$Q$.

\begin{assumption}\label{assumption: full spark}Window $g\in \mathbb{C}^M$ is nowhere vanishing, such that the corresponding Gabor frame $(g, \Lambda)$ is \emph{full-spark}, that is, any $M$ vectors in $(g, \Lambda)$ are linearly independent.
\end{assumption}

Note that the set of $g\in\mathbb{S}^{M-1}$ for which Assumption~\ref{assumption: full spark} is satisfied is a full measure set in $\mathbb{S}^{M-1}$ \cite{pfander1, malik}. In particular, if ${g\sim \text{Unif.} \left( \mathbb{S}^{M-1}\right)}$, then Assumption~\ref{assumption: full spark} is satisfied with probability~$1$.

\begin{assumption}\label{assumption: small Fourier bias}  A subset $Q\subset \mathbb{Z}_M$ satisfies
${\Vert Q\Vert _u \leq c \mathbf{P}(Q)}$, for some constant $c\in(0,1) $.
\end{assumption}

Note that for any subset $Q\subset \mathbb{Z}_M$ we have
${\Vert Q\Vert _u \leq \mathbf{P}(Q)}$, and equality holds only for sets $Q$ with very specific structure (namely, for cosets of a proper subgroup of $\mathbb{Z}_M$)~\cite{tao}. Subsets that satisfy Assumption~\ref{assumption: small Fourier bias} should have small Fourier bias. Such subsets are called \emph{linearly uniform} or \emph{pseudo-random}. In particular, if we generate a subset $Q$ at random, by uniformly and independently selecting elements of $\mathbb{Z}_M$ with probability $\frac{c^2\log(M)}{9M}$, then Assumption~\ref{assumption: small Fourier bias} is satisfied with high probability~\cite{tao,mixon2}.

\medskip

We formulate our result as follows.

\begin{theorem}\label{thm: phase retrievalble full dim}
Let $g\in\mathbb{C}^M$ satisfy Assumption~\ref{assumption: full spark} and ${\Lambda = T\times F\subset \mathbb{Z}_M\times \mathbb{Z}_M}$ with $\vert \Lambda\vert > CM$, for some $C>3$. Suppose further that sets $Q\subset T-T$ and $P\subset F-F$ satisfy Assumption~\ref{assumption: small Fourier bias} with $c = \frac{C-3}{C-1}$. Then $(G, \Lambda)$ with $G = \{g\}\cup G'$ defined as in \eqref{eq: auxiliar masks} is a phase retrievable frame.
\end{theorem}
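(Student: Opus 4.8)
The plan is to prove injectivity of $\mathcal{A}_{(G,\Lambda)}$ in contrapositive form: if $x,y\in\mathbb{C}^M$ satisfy $\mathcal{A}_{(G,\Lambda)}(x)=\mathcal{A}_{(G,\Lambda)}(y)$, then $y=e^{i\theta}x$ for some $\theta$. We may assume $x\neq 0$, hence $y\neq 0$. Write $c_x(k,\ell)=\langle x,\pi(k,\ell)g\rangle$, and likewise $c_y$, and set $S=\{(k,\ell)\in\Lambda:\ c_x(k,\ell)\neq 0\}$. The equal-measurements hypothesis gives $|c_x(k,\ell)|=|c_y(k,\ell)|$ for every $(k,\ell)\in\Lambda$, so $S$ is the same whether read off from $x$ or from $y$. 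By Assumption~\ref{assumption: full spark} the frame $(g,\Lambda)$ is full-spark, so if $M$ of the coefficients $c_x(k,\ell)$ vanished then $x$ would be orthogonal to $M$ linearly independent vectors and be $0$; hence $|\Lambda\setminus S|\le M-1$, and since $|\Lambda|>CM$ we get $|S|>(C-1)M$.

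Next I use polarization to read off relative phases. Applying Lemma~\ref{lem: relative phases} to $x$ and to $y$ and observing that its right-hand side depends only on the magnitudes $|\langle x,\pi(k,\ell)g_{qpt}\rangle|^2=|\langle y,\pi(k,\ell)g_{qpt}\rangle|^2$ — which are entries of $\mathcal{A}_{(G,\Lambda)}$, since each $\pi(k,\ell)g_{qpt}$ with $(k,\ell)\in\Lambda$, $q\in Q$, $p\in P$, $t\in\{0,1,2\}$ belongs to $(G',\Lambda)$ — I obtain
\[
c_x(k,\ell)\,\overline{c_x(k+q,\ell+p)}=c_y(k,\ell)\,\overline{c_y(k+q,\ell+p)}\qquad\text{for all }(k,\ell)\in\Lambda,\ q\in Q,\ p\in P.
\]
On $S$ define the unimodular phase $\sigma(k,\ell)=c_x(k,\ell)/c_y(k,\ell)$. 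Whenever $(k,\ell)$ and $(k+q,\ell+p)$ both lie in $S$, the identity above forces $\sigma(k,\ell)=\sigma(k+q,\ell+p)$. Thus $\sigma$ is constant on every connected component of the graph $\mathcal{H}$ with vertex set $S$, where $(k,\ell)$ and $(k+q,\ell+p)$ are joined (an undirected edge) precisely when $q\in Q$, $p\in P$ and both endpoints lie in $S$.

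The crux is to produce a connected component $\mathcal{C}$ of $\mathcal{H}$ with $|\mathcal{C}|>M$. Consider the ambient graph $\mathcal{G}$ on $\Lambda=T\times F$ with the same adjacency rule but requiring only that the two endpoints lie in $\Lambda$; it is the tensor product of the graph on $T$ induced by $\mathrm{Cay}(\mathbb{Z}_M,Q\cup(-Q))$ and the graph on $F$ induced by $\mathrm{Cay}(\mathbb{Z}_M,P\cup(-P))$. The eigenvalues of a circulant adjacency matrix are values of the discrete Fourier transform of its connection set, so Assumption~\ref{assumption: small Fourier bias} with $c=\tfrac{C-3}{C-1}$ converts into the spectral-gap bound $\Spg(\mathcal{G})\ge 1-c=\tfrac{2}{C-1}$. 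A vertex-deletion estimate for expanders then closes the argument. Writing $d$ for the common degree of $\mathcal{G}$ and noting that no edge of $\mathcal{G}$ can join two distinct components of $\mathcal{H}=\mathcal{G}|_S$: if every component $\mathcal{C}_i$ of $\mathcal{H}$ had at most $\tfrac12|\Lambda|$ vertices, then $\sum_i e_{\mathcal{G}}(\mathcal{C}_i,\Lambda\setminus\mathcal{C}_i)=e_{\mathcal{G}}(S,\Lambda\setminus S)$ would be at least $\Spg(\mathcal{G})\,d\,|S|/2$ (Cheeger-type edge expansion) and at most $d(M-1)$, forcing $\Spg(\mathcal{G})\,|S|\le 2(M-1)<2M\le\Spg(\mathcal{G})\,|S|$, a contradiction; hence some component $\mathcal{C}$ has $|\mathcal{C}|>\tfrac12|\Lambda|$, and applying the same edge count to $\mathcal{C}$ alone gives $|\Lambda|-|\mathcal{C}|\le\tfrac{2(M-1)}{\Spg(\mathcal{G})}<(C-1)M$, so $|\mathcal{C}|>|\Lambda|-(C-1)M>M$.

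Finally, on $\mathcal{C}$ we have $c_x(k,\ell)=c\,c_y(k,\ell)$ for one unimodular constant $c$, while on $\Lambda\setminus S$ both sides vanish; so $\langle x-cy,\pi(k,\ell)g\rangle=0$ for all $(k,\ell)\in\mathcal{C}$. Since $|\mathcal{C}|>M$ and $(g,\Lambda)$ is full-spark, $\{\pi(k,\ell)g:(k,\ell)\in\mathcal{C}\}$ spans $\mathbb{C}^M$, whence $x=cy$ and $[x]=[y]$. The main obstacle is the connectivity step: the spectral-gap identity and the deletion estimate are cleanest when the product set $T\times F$ is compatible with $Q$ and $P$ — automatic, e.g., when $T,F$ are subgroups of $\mathbb{Z}_M$, so that $\mathcal{G}$ is a genuinely $d$-regular Cayley graph — and for arbitrary $T,F$ one must check that the Fourier-bias hypothesis on $Q\subseteq T-T$ and $P\subseteq F-F$ already forces the required connectivity of the induced graphs on $T$ and on $F$; this is the single place where Assumption~\ref{assumption: small Fourier bias}, rather than mere non-triviality of $Q$ and $P$, is truly used.
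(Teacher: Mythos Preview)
Your argument follows the paper's proof essentially step for step: use Lemma~\ref{lem: relative phases} to extract relative phases along edges of the Cayley-type graph on $\Lambda$ determined by $Q$ and $P$, convert Assumption~\ref{assumption: small Fourier bias} into a spectral-gap bound $\Spg\ge\tfrac{2}{C-1}$ via the circulant eigenvalue formula, delete the at most $M-1$ vertices with vanishing coefficients (full-spark), exhibit a connected component of size $\ge M$, and finish by full-spark again. The paper phrases the last steps constructively (phase propagation/angular synchronization) and cites an expander vertex-deletion lemma, whereas you phrase it contrapositively and supply an explicit Cheeger-type count; the mathematical content is the same.

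One remark: the caveat you raise at the end --- that for arbitrary $T,F$ the product graph on $T\times F$ need not be regular, so the identification of its adjacency matrix with $\Circ(\mathbf{1}_Q)\otimes\Circ(\mathbf{1}_P)$ and the ensuing spectral computation require justification --- is not addressed in the paper either; the paper simply writes $A=\Circ(\mathbf{1}_Q)\otimes\Circ(\mathbf{1}_P)$, which is literally correct only when $T=F=\mathbb{Z}_M$. In the applications (Main Theorem, Corollary~\ref{cor: random sets}) one takes $F=\mathbb{Z}_M$ and $Q=T-T$, so the $T$-factor is the complete graph with loops and regularity holds; your flag is therefore well placed rather than a defect of your argument relative to the paper's.
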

\begin{proof}
Let us consider a graph $(\Lambda, E)$ with the set of vertices $\Lambda$ and the set of edges 
\begin{equation}\label{eq: edge set}
    E = \{\left( (k, \ell), (k', \ell') \right) \colon k' - k\in Q, \ell' - \ell\in P\}\subset \Lambda \times \Lambda.
\end{equation} 
Then, for any edge $e = \left((k, \ell), (k', \ell')\right)\in E$, such that $\vert \langle x, \pi(k, \ell)g\rangle \vert\neq 0$ and $\vert \langle x, \pi(k', \ell')g\rangle \vert\neq 0$, using Lemma~\ref{lem: relative phases} we obtain that the relative phase  $$\omega_e = \frac{\langle x, \pi(k, \ell)g \rangle}{\vert\langle x, \pi(k, \ell)g \rangle\vert}\left( \frac{\langle x, \pi(k', \ell')g \rangle}{\vert\langle x, \pi(k', \ell')g \rangle\vert} \right)^{-1}$$ can be computed from phaseless measurements $\mathcal{A}_{(G', \Lambda)}$ as
$$\frac{e^{\frac{2\pi i k p}{M}}}{3\vert\langle x, \pi(k, \ell)g \rangle\vert \vert\langle x, \pi(k', \ell')g \rangle\vert}\sum_{t = 0}^{2} e^{\frac{2\pi i t}{3}} \vert \langle x, \pi(k, \ell)g_{qpt}\rangle \vert^2,$$ where $p = \ell' - \ell$ and $q = k' - k$.

We are going to use the obtained graph $(\Lambda, E)$ with weighted edges to reconstruct (up to a global phase shift) the phases of (a subset of) the frame coefficients of $x$ with respect to the Gabor frame $(g, \Lambda)$.
Note that for any $(k,\ell)\in \Lambda$, such that $\vert \langle x, \pi(k, \ell)g\rangle \vert = 0$, the relative phase $\omega_e$ is not defined for any $e = \left((k, \ell), (k', \ell')\right)\in E$, thus we delete these edges from the graph to obtain a modified graphs $(\Lambda, E')$ with the weighted edges, where $$E' = E\setminus \{(\pi, \pi')\colon \vert \langle x, \pi(\pi)g\rangle \vert = 0 \text{ or } \vert \langle x, \pi(\pi')g\rangle \vert = 0\}.$$ 

\begin{claim}\label{claim: connected component}
The graph $(\Lambda, E')$ constructed above has a connected component of size at lest $M$.
\end{claim}
\begin{proof}[Proof of Claim~\ref{claim: connected component}]
Let $A$ be the adjacency matrix of the graph $(\Lambda, E)$. By construction of $E$, $A = \Circ\left(\mathbf{1}_Q \right)\otimes \Circ\left(\mathbf{1}_P \right)$,
where $\otimes$ denotes the Kronecker product. Then, the eigenvalues of $A$ are given by 
\begin{align*}
    \lambda_{jj'}(A) & =  \lambda_j\left(\Circ\left(\mathbf{1}_Q \right)\right) \lambda_{j'}\left(\Circ\left(\mathbf{1}_P \right)\right) \\
    & = \sum_{m\in\mathbb{Z}_M} \mathbf{1}_Q(m)e^{\frac{-2\pi i j m}{M}} \sum_{m'\in\mathbb{Z}_M} \mathbf{1}_P(m')e^{\frac{-2\pi i j' m'}{M}},
\end{align*}
as the eigenvalues of a circulant matrix $\Circ\left(\mathbf{1}_Q \right)$ are given by the entries of the Fourier transform $M\mathcal{F}\left(\mathbf{1}_Q \right)$. Since
\begin{align*}
    \vert \lambda_{j}(\Circ\left(\mathbf{1}_Q \right))\vert & \leq \sum_{m\in\mathbf{Z}_M} \left\vert \mathbf{1}_Q(m)e^{\frac{-2\pi i j m}{M}} \right\vert = \vert Q \vert,
\end{align*}
with equality when $j = 0$, it follows that $$\lambda_{\max}(\Circ\left(\mathbf{1}_Q \right)) = \lambda_{0}(\Circ\left(\mathbf{1}_Q \right)) = \vert Q \vert.$$ Similarly, $\lambda_{\max}(\Circ\left(\mathbf{1}_P \right)) = \lambda_{0}(\Circ\left(\mathbf{1}_P \right)) = \vert P \vert$, and $\lambda_{\max}(A) = \lambda_{00}(A) =  \vert Q \vert \vert P \vert$. Using this and the definition of the Fourier bias $\Vert \cdot \Vert_u$ of a set, we get
\begin{align*}
    & \Spg(\Lambda, E) = 1 - \frac{1}{\vert Q \vert \vert P \vert} \max_{(j,j')\ne (0,0)}\vert\lambda_{jj'}(A)\vert \\ & = 1 - \frac{1}{\vert Q \vert \vert P \vert} \max_{(j,j')\ne (0,0)}\vert\lambda_{j}(\Circ\left(\mathbf{1}_Q \right))\vert\vert\lambda_{j'}(\Circ\left(\mathbf{1}_P \right))\vert \\
    & = 1 - \max\left\lbrace\frac{M}{\vert Q \vert} \Vert Q \Vert_u, \frac{M}{ \vert P \vert} \Vert P \Vert_u \right\rbrace,
\end{align*}
that is, as both $P$ and $Q$ satisfy  Assumption~\ref{assumption: small Fourier bias} and $\vert \Lambda \vert >CM$, $$\Spg (\Lambda, E) = 1 - \max\left\lbrace\frac{\Vert Q \Vert_u}{\mathbf{P}(Q)} , \frac{\Vert P \Vert_u}{ \mathbf{P}(P)}  \right\rbrace \geq \frac{2}{C-1} \geq \frac{2M}{\vert \Lambda \vert - M}.$$
The graph $(\Lambda, E')$ is obtained from $(\Lambda, E)$ by removing $$k = \left\vert \{(\lambda, \lambda')\colon \vert \langle x, \pi(\lambda)g\rangle \vert = 0 \text{ or } \vert \langle x, \pi(\lambda')g\rangle \vert = 0\}\right\vert$$ edges. By Assumption~\ref{assumption: full spark}, $(g, \Lambda)$ is a full spark frame, thus $\left\vert \{\pi\in \Lambda \colon \vert \langle x, \pi(\lambda)g\rangle \vert = 0 \}\right\vert\leq M-1$ for any $x\neq 0$, and ${k \leq |P||Q|(M-1)}$. Applying \cite[Lemma 5.2]{harsha}, we obtain that $(\Lambda, E')$ has a connected component of size at least $\left( 1 - \frac{2M}{\vert \Lambda \vert \Spg(\Lambda, E)}\right)\vert \Lambda \vert = M$.
\end{proof}

Using Claim~\ref{claim: connected component}, let us fix $(\Lambda', E'')$ to be a connected component of $(\Lambda, E')$ with $\vert \Lambda'\vert \ge M$ and $E'' = E'\cap \Lambda' \times \Lambda'$. By Assumption~\ref{assumption: full spark}, $(g, \Lambda)$ is a full spark frame, thus any signal $x\in \mathbb{C}^M$ can be reconstructed from the set of its frame coefficients $\left\lbrace\langle x, \pi(\lambda)g \rangle = \frac{\langle x, \pi(\lambda)g \rangle}{\vert\langle x, \pi(\lambda)g \rangle\vert} \sqrt{\vert\langle x, \pi(\lambda)g \rangle\vert^2}\right\rbrace_{\lambda\in \Lambda'}$. For this reason, to uniquely recover $x$, it is enough to determine (up to a global phase shift) the phases of the frame coefficients $\frac{\langle x, \pi(\lambda)g \rangle}{\vert\langle x, \pi(\lambda)g \rangle\vert}$, for all $\lambda\in \Lambda'$. To do so, we iteratively propagate relative phases $\omega_e$, $e\in E''$ inside the connected component $(\Lambda', E'')$ or apply angular synchronization algorithm~\cite{singer}. 
\end{proof}

The Main Theorem from Section~\ref{sec: intro} can be deduced from Theorem~\ref{thm: phase retrievalble full dim} by choosing $g\sim \text{Unif.} \left( \mathbb{S}^{M-1}\right)$, $\Lambda = T\times \mathbb{Z}_M$ with $\vert T \vert = C$, $Q = T-T$, and $P$ being a minimizer in~\eqref{eq: beta_definition}.

\begin{remark}
Note that the proof of Theorem~\ref{thm: phase retrievalble full dim} does not only show that under Assumptions~\ref{assumption: full spark}~and~\ref{assumption: small Fourier bias} the multi-window Gabor frame is phase retrievable, but also suggests a reconstruction algorithm that is similar to~ \cite{mixon2, pfander2019robust}.
\end{remark}

The number of vectors in $(\{g\}\cup G', \Lambda)$ is $\vert \Lambda \vert (1 + 3\vert Q\vert \vert P \vert) = O(\vert Q\vert \vert P \vert M)$. To reduce the cardinality the frame we constructed, we would like to be able to construct small subsets $P,Q\subset \mathbb{Z}_M$ with small Fourier bias. In particular, for random subset $P\subset \mathbb{Z}_M$ of cardinality $\vert P \vert  = O(\log M)$ it has been shown in \cite[Lemmas~6~and~7]{mixon2} that $\Vert P \Vert_u < c\mathbf{P}(P)$ for some $c\in (0,1)$ with high probability. Using this observation, we deduce the following corollary also proven in \cite{pfander2019robust}.

\begin{corollary}[Theorem 3.4, \cite{pfander2019robust}]\label{cor: random sets}
Let $g\sim \text{Unif.} \left( \mathbb{S}^{M-1}\right)$ and $\Lambda = T\times \mathbb{Z}_M$ with $\vert T \vert = C$. Suppose further that $Q = T-T$ and $P$ is a random subset of $\mathbb{Z}_M$, such that ${\mathbf{1}_P(m)\sim \text{i.i.d. Bernoulli}\left(\frac{\alpha \log(M)}{M}\right)}$. Then, with high probability, $(G, \Lambda)$ with $G = \{g\}\cup G'$ defined as in \eqref{eq: auxiliar masks} is a phase retrievable frame.
\end{corollary}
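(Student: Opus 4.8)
The plan is to deduce Corollary~\ref{cor: random sets} directly from Theorem~\ref{thm: phase retrievalble full dim} by verifying that, with high probability, the randomly chosen data $g$, $\Lambda$, $Q$, and $P$ satisfy all the hypotheses of that theorem. Theorem~\ref{thm: phase retrievalble full dim} requires three things: (i) $g$ satisfies Assumption~\ref{assumption: full spark}; (ii) $|\Lambda| > CM$ for some $C > 3$; and (iii) $Q \subset T - T$ and $P \subset F - F$ satisfy Assumption~\ref{assumption: small Fourier bias} with $c = \frac{C-3}{C-1}$. So the proof is really a checklist, and the bulk of the work is the probabilistic estimate on the Fourier bias of $P$.

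First I would dispatch item (i): since $g \sim \mathrm{Unif.}(\mathbb{S}^{M-1})$, the remark following Assumption~\ref{assumption: full spark} (citing \cite{pfander1, malik}) gives that $g$ is nowhere vanishing and $(g, \Lambda)$ is full-spark with probability $1$. For item (ii), with $\Lambda = T \times \mathbb{Z}_M$ and $|T| = C$ we have $|\Lambda| = CM$; strictly this is an equality rather than a strict inequality, so I would either note that one takes $|T|$ to be any constant exceeding $C$, or absorb this into the choice of constants (replace $C$ by $C' = C + \varepsilon$ throughout and adjust $c$ accordingly) — this is a cosmetic point and I would handle it in one line. Also note $Q = T - T \subset T - T$ trivially, and $P \subset \mathbb{Z}_M = F - F$ when $F = \mathbb{Z}_M$, so the containment requirements in item (iii) are automatic.

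The substantive step is showing that $P$, with $\mathbf{1}_P(m) \sim \mathrm{i.i.d.\ Bernoulli}\!\left(\frac{\alpha \log M}{M}\right)$, satisfies $\|P\|_u \le c\, \mathbf{P}(P)$ with high probability, where $c = \frac{C-3}{C-1}$ (or its $\varepsilon$-perturbed version). This is precisely \cite[Lemmas~6~and~7]{mixon2}: Lemma~6 controls the density $\mathbf{P}(P) = |P|/M$, showing that $|P| \gtrsim \alpha \log M$ (so that $\mathbf{P}(P) \gtrsim \frac{\alpha \log M}{M}$) with high probability by a Chernoff bound on the sum of Bernoullis; Lemma~7 controls $\|P\|_u = \max_{m \ne 0} |\mathcal{F}(\mathbf{1}_P)(m)|$ by writing $M\mathcal{F}(\mathbf{1}_P)(m) = \sum_j \mathbf{1}_P(j) e^{-2\pi i j m / M}$ as a sum of bounded independent mean-zero-after-centering terms, applying a Hoeffding/Bernstein bound for each fixed $m$, and union-bounding over the $M-1$ nonzero frequencies. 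The upshot is $\|P\|_u \lesssim \frac{\sqrt{\alpha}\,\log M}{M}$ with high probability, so that $\frac{\|P\|_u}{\mathbf{P}(P)} \lesssim \frac{1}{\sqrt{\alpha}}$; choosing $\alpha$ large enough (depending only on $C$) makes this ratio smaller than $c$. I would state that $\alpha$ is chosen as a function of $C$ so that the Mixon--Xu estimates yield $\|P\|_u \le c\,\mathbf{P}(P)$ with probability at least $1 - o(1)$.

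Finally I would combine: intersecting the three high-probability events (full-spark, which has probability $1$; the density lower bound; the Fourier-bias upper bound), all hypotheses of Theorem~\ref{thm: phase retrievalble full dim} hold simultaneously with high probability, hence $(G, \Lambda)$ is phase retrievable with high probability, which is the claim. The only real obstacle is making sure the constant $c = \frac{C-3}{C-1}$ from Theorem~\ref{thm: phase retrievalble full dim} is matched — i.e.\ quoting \cite[Lemmas~6~and~7]{mixon2} with explicit enough constants to choose $\alpha = \alpha(C)$; since those lemmas give the ratio $\|P\|_u / \mathbf{P}(P)$ decaying like $\alpha^{-1/2}$, this is just a matter of taking $\alpha$ above the threshold $\left(\frac{C-1}{C-3}\right)^2$ times the absolute constant from \cite{mixon2}. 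Everything else is bookkeeping. (Indeed, this is exactly how the result is obtained in \cite[Theorem~3.4]{pfander2019robust}, which we cite.)
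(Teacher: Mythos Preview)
Your approach is exactly the paper's: the corollary is presented (without a separate proof) as an immediate consequence of Theorem~\ref{thm: phase retrievalble full dim}, invoking \cite[Lemmas~6~and~7]{mixon2} to control $\|P\|_u/\mathbf{P}(P)$ for the random set $P$ and the full-measure remark after Assumption~\ref{assumption: full spark} for $g$.

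One point you (and the paper) gloss over deserves a sentence, though. You list Assumption~\ref{assumption: small Fourier bias} for $Q$ among the hypotheses to check but never check it, and for an arbitrary $T\subset\mathbb{Z}_M$ of size $C$ the set $Q=T-T$ need not satisfy $\|Q\|_u\le c\,\mathbf{P}(Q)$ (take $T$ an arithmetic progression, so that $Q$ is an interval). The correct observation is that when $Q=T-T$, every pair $k,k'\in T$ has $k'-k\in Q$, so the $T$-factor of the adjacency matrix in Claim~\ref{claim: connected component} is the all-ones $C\times C$ matrix, whose non-principal eigenvalues are $0$; hence $\Spg(\Lambda,E)=1-\|P\|_u/\mathbf{P}(P)$ and only the Fourier bias of $P$ enters. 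With that line added, your checklist goes through.
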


\section{Multi-window Gabor phase retrieval under lower-dimensional priors}\label{sec: phase retrievable low dim priors}

In this section, we generalize findings of Theorem~\ref{thm: phase retrievalble full dim} to the case when there is some prior knowledge available on the signal of interest $x$. More precisely, we study how the number of phaseless multi-window Gabor measurements sufficient for reconstruction of $x$ changes in the case when $x$ is an element of an (unknown) lower-dimensional subspace of $\mathbb{C}^M$.

\begin{theorem}\label{thm: phase retrievalble subspace}
Let $g\in\mathbb{C}^M$ satisfy Assumption~\ref{assumption: full spark} and ${\Lambda = T\times F\subset \mathbb{Z}_M\times \mathbb{Z}_M}$ with $\vert \Lambda\vert > Cd$, for some $C>3$. Suppose further that sets $Q\subset T-T$ and $P\subset F-F$ satisfy Assumption~\ref{assumption: small Fourier bias} with $c = \frac{C-3}{C-1}$. Then, for any ${W\in \mathbb{C}^{d\times M}}$ with $d\leq M$ and $\rank(W) = d$, the phaseless map $\mathcal{A}_{(G, \Lambda)}$  with $G = \{g\}\cup G'$ defined as in \eqref{eq: auxiliar masks} is injective on ${\{x\in \mathbb{C}^M \colon x = Wh,~ h\in \mathbb{C}^d\}}$.
\end{theorem}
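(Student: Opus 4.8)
The plan is to mirror the proof of Theorem~\ref{thm: phase retrievalble full dim}, tracking where the ambient dimension $M$ can be replaced by the subspace dimension $d$. The key structural difference is that the signal $x = Wh$ now lives in a $d$-dimensional subspace, so the relevant notion of ``enough frame coefficients to reconstruct'' changes. First I would observe that $(g,\Lambda)$ being full-spark (Assumption~\ref{assumption: full spark}) means that any $M$ of the vectors $\pi(\lambda)g$ are linearly independent; restricting the linear functionals $\langle \cdot, \pi(\lambda)g\rangle$ to the subspace $\{Wh : h\in\mathbb{C}^d\}$, this says that any $d$ of the restricted functionals $h\mapsto \langle Wh, \pi(\lambda)g\rangle = \langle h, W^*\pi(\lambda)g\rangle$ are linearly independent, i.e. $\{W^*\pi(\lambda)g\}_{\lambda\in\Lambda}$ is a full-spark frame for $\mathbb{C}^d$. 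Hence knowing all of $\langle x,\pi(\lambda)g\rangle$ for $\lambda$ in any subset $\Lambda'\subset\Lambda$ with $|\Lambda'|\ge d$ determines $h$ (hence $x$) uniquely.

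Next I would redo the graph argument of Claim~\ref{claim: connected component}. The graph $(\Lambda,E)$ and its spectral gap computation are unchanged: $\Spg(\Lambda,E) = 1 - \max\{\|Q\|_u/\mathbf{P}(Q),\ \|P\|_u/\mathbf{P}(P)\} \ge \tfrac{2}{C-1}$ by Assumption~\ref{assumption: small Fourier bias} with $c = \tfrac{C-3}{C-1}$. The only place $M$ entered the end of that claim was (i) the bound $|\{\lambda\in\Lambda : \langle x,\pi(\lambda)g\rangle = 0\}|\le M-1$, which used full-spark in $\mathbb{C}^M$, and (ii) the target size $M$ of the surviving connected component. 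For (i), the same restricted-frame observation gives: since $x = Wh\ne 0$ with $h\ne 0$, and $\{W^*\pi(\lambda)g\}$ is full-spark in $\mathbb{C}^d$, at most $d-1$ of the coefficients $\langle h, W^*\pi(\lambda)g\rangle$ can vanish, so we delete at most $|P||Q|(d-1)$ edges to pass from $(\Lambda,E)$ to $(\Lambda,E')$. For (ii), applying \cite[Lemma 5.2]{harsha} with $|\Lambda| > Cd$ and $\Spg(\Lambda,E)\ge \tfrac{2}{C-1}\ge \tfrac{2d}{|\Lambda|-d}$ yields a connected component of size at least $\bigl(1 - \tfrac{2d}{|\Lambda|\,\Spg(\Lambda,E)}\bigr)|\Lambda| \ge d$. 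So $(\Lambda,E')$ has a connected component $(\Lambda',E'')$ with $|\Lambda'|\ge d$.

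Finally, Lemma~\ref{lem: relative phases} is purely an algebraic identity about inner products and does not care whether $x$ lives in a subspace, so on the surviving component $(\Lambda',E'')$ all relative phases $\omega_e$ are computable from $\mathcal{A}_{(G',\Lambda)}(x)$ exactly as before. Propagating these relative phases (or running angular synchronization \cite{singer}) through the connected component recovers, up to a single global phase, the phases of $\{\langle x,\pi(\lambda)g\rangle\}_{\lambda\in\Lambda'}$; combined with the moduli from $\mathcal{A}_{(g,\Lambda)}(x)$, this gives all frame coefficients $\langle x,\pi(\lambda)g\rangle$ for $\lambda\in\Lambda'$ up to global phase, and by the first paragraph these determine $[x]$. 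I expect the main (though mild) obstacle to be the bookkeeping in the first paragraph — carefully checking that full-spark of $(g,\Lambda)$ in $\mathbb{C}^M$ transfers to full-spark of the compressed system $\{W^*\pi(\lambda)g\}_{\lambda\in\Lambda}$ in $\mathbb{C}^d$ for \emph{every} full-rank $W$, which follows because $\rank(W)=d$ forces the $d\times d$ minors of $(W^*\pi(\lambda_1)g\mid\cdots\mid W^*\pi(\lambda_d)g)$ to be nonzero whenever the corresponding $M\times M$ minor of the original Gabor system is nonzero; the rest is a line-by-line substitution of $d$ for $M$ in the existing proof.
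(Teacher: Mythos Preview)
Your proposal is correct and follows essentially the same route as the paper: pass to the compressed system $\{W^*\pi(\lambda)g\}_{\lambda\in\Lambda}$, show it is full spark in $\mathbb{C}^d$, rerun the spectral-gap/connected-component argument of Claim~\ref{claim: connected component} with $d$ in place of $M$, and finish by phase propagation exactly as in Theorem~\ref{thm: phase retrievalble full dim}. The only minor difference is presentational: the paper isolates the full-spark transfer as a separate claim proved via a rank argument ($\rank(W^*V)=\min\{d,k,M\}$) rather than via your determinant/minor sketch, which as stated is a bit loose (there is no ``corresponding $M\times M$ minor'' of an $M\times d$ matrix); you should phrase that step in terms of ranks as the paper does.
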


\begin{proof}
    First, note that for $x= Wh$, we have $${\mathcal{A}_{(G, \Lambda)}(x) = \mathcal{A}_{\Psi}(h)},$$ where $\Psi = \{W^*\varphi \colon \varphi\in (G, \Lambda)\}$.
    As $x$ is uniquely determined by $h$, it is enough to show that $h$ can be uniquely (up to a global phase factor) recovered  from its phaseless measurements $\mathcal{A}_{\Psi}(h)$. 

    Let us write $\Psi = \Psi_g\cup \Psi_{G'}$, where $\Psi_g = \{W^*\pi(\lambda)g\}_{\lambda\in \Lambda}$ and $\Psi_{G'} = \{W^*\pi(\lambda)g_{qpt}\}_{\lambda\in \Lambda, g_{qpt}\in G'}$. Similar to the proof of Theorem~\ref{thm: phase retrievalble full dim}, we are going to use $\Psi_{G'}$ to compute relative phases between the frame coefficients of $h$ with respect  to $\Psi_{g}$. Indeed, following the proof of Lemma~\ref{lem: relative phases}, we observe that for $\lambda = (k,\ell)$ and $\lambda' = (k+q, \ell+p)$
    \begin{align*}
        W^*\pi(\lambda)g_{qpt} & = W^*\left(\pi(\lambda)g + e^{\frac{-2\pi i t}{3}}e^{\frac{2\pi i kp}{M}} \pi(\lambda')g \right)\\ & = W^*\pi(\lambda)g + e^{\frac{-2\pi i t}{3}}e^{\frac{2\pi i kp}{M}} W^*\pi(\lambda')g.
    \end{align*}
    Thus, Lemma~\ref{lem: relative phases} can be used to show that phaseless measurements of $h$ with respect to $\Psi_{G'}$ allow us to compute relative phases $\omega_e$ for all $e = \left( \lambda, \lambda'\right) \in E$ with $\vert \langle h, W^*\pi(\lambda)g \rangle\vert \neq 0$ and $\vert \langle h, W^*\pi(\lambda')g \rangle\vert \neq 0$, where $E$ is defined as in \eqref{eq: edge set}. 
    
    \begin{claim}\label{claim: full spark low dim projection}
        $\Psi_g = \{W^*\pi(\lambda)g\}_{\lambda\in \Lambda}\subset \mathbb{C}^d$ is a full spark frame.
    \end{claim}
    \begin{proof}[Proof of Claim~\ref{claim: full spark low dim projection}]
        By Assumption~\ref{assumption: full spark}, $(g, \Lambda)$ is a full spark frame, that is, for any distinct $\lambda_1, \dots, \lambda_k\in \Lambda$, $$\rank \Big( \pi(\lambda_1)g, \dots, \pi(\lambda_k)g \Big) = \min\{k, M\}.$$ Since $\rank(W) = d$, it follows that $$\rank \Big( W^*\pi(\lambda_1)g, \dots, W^*\pi(\lambda_k)g \Big) = \min\{d, k, M\}.$$ Thus, vectors $W^*\pi(\lambda_1)g, \dots, W^*\pi(\lambda_k)g$ are linearly independent if $k\leq d$.
    \end{proof}
    Let us consider the graph $(\Lambda, E)$. From Claim~\ref{claim: full spark low dim projection}, it follows that $\left\vert \{\lambda\in \Lambda \colon \vert \langle h, W^*\pi(\lambda)g \rangle\vert = 0\}\right\vert\leq d-1$, thus the number of edges $e\in E$, for which $\omega_e$ is not defined is at most $\vert P\vert \vert Q \vert (d-1)$. Applying Claim~\ref{claim: connected component} with $d$ in place of $M$ and \cite[Lemma 5.2]{harsha}, we derive that deleting these edges from the graph leads to a connected component $(\Lambda', E'')$ of size $\vert \lambda'\vert \geq d$. From Claim~\ref{claim: full spark low dim projection} it follows that $h$ can be recovered from its frame coefficients with respect to $\{W^*\pi(\lambda)g \}_{\lambda\in \Lambda'}$. The proof is then concluded by computing the phases of the frame coefficients $\{\langle h, W^*\pi(\lambda)g\rangle \}_{\lambda\in \Lambda'}$ from the relative phases $\omega_e$, $e\in E''$ using phase propagation or angular synchronization.
\end{proof}

Note that, similarly to Corollary~\ref{cor: random sets}, by selecting the window $g$ and the sets $P$ and $Q$ at random, one can derive that for any $W\in \mathbb{C}^{d\times M}$ the phaseless map $\mathcal{A}_{(G, \Lambda)}$ with ${\vert (G, \Lambda)\vert  = O(d\log(d))}$ is injective on ${\{x\in \mathbb{C}^M \colon x = Wh,~ h\in \mathbb{C}^d\}}$ with high probability. That is, in the case when $x$ is known to be an element of a lowed-dimensional subspace, the ambient dimension $M$ can be replaced in the sufficient number of phaseless measurements with the subspace dimension $d$.

\section{Discussion}\label{sec: discussion}

In this paper, we showed how polarization idea \cite{mixon1} can be used to construct phase retrievable multi-window Gabor frames of small cardinality. As the construction of such frames relies on the small subsets of $\mathbb{Z}_M$ with small Fourier bias, explicit construction of such subsets for every $M$ can further reduce the number of the phaseless measurements required for the signal reconstruction. 

In Section~\ref{sec: phase retrievable low dim priors}, we discussed multi-window Gabor phase retrieval under the assumption that the set of signals we aim to recover lies in a lower-dimensional subspace of $\mathbb{C}^M$. Surely, this kind of priors is not general enough to be used in practice. A more interesting, both from mathematical and practical points of view, class of priors are \emph{generative priors} studied in~\cite{hand2018phase}. There we assume that $x = W(h)$, $h\in \mathbb{C}^d$, for some non-linear generative map $W$ given, for instance,  by a neural network. Studying phase retrievability of the multi-window Gabor frames and determining how the sufficient number of the phaseless measurements changes under such  generative priors is an important direction for further research. 

\section*{Acknowledgements}
Palina Salanevich is supported by NWO Talent programme Veni ENW grant, file number VI.Veni.212.176.

\let\Section\section 
\def\section*#1{\Section{#1}} 
\bibliographystyle{alpha}
\bibliography{SampTA_Gabor_phase_retrieval}

\newcommand{\etalchar}[1]{$^{#1}$}
\begin{thebibliography}{HHLW22}

\bibitem[ABFM14]{mixon1}
Boris Alexeev, Afonso~S. Bandeira, Matthew Fickus, and Dustin~G. Mixon.
\newblock Phase retrieval with polarization.
\newblock {\em SIAM Journal on Imaging Sciences}, 7(1):35--66, 2014.

\bibitem[AW19]{alaifari2019ill}
Rima Alaifari and Matthias Wellershoff.
\newblock Ill-conditionedness of discrete gabor phase retrieval and a possible
  remedy.
\newblock In {\em 2019 13th International conference on Sampling Theory and
  Applications (SampTA)}, pages 1--4. IEEE, 2019.

\bibitem[AW21]{alaifari2021stability}
Rima Alaifari and Matthias Wellershoff.
\newblock Stability estimates for phase retrieval from discrete gabor
  measurements.
\newblock {\em Journal of Fourier Analysis and Applications}, 27:1--31, 2021.

\bibitem[BCE06]{balan1}
Radu Balana, Peter Casazza, and Dan Edidin.
\newblock On signal reconstruction without phase.
\newblock {\em Applied and Computational Harmonic Analysis}, 20(3):pp.345--356,
  May 2006.

\bibitem[BCM14]{mixon2}
Afonso~S. Bandeira, Yutong Chen, and Dustin~G. Mixon.
\newblock Phase retrieval from power spectra of masked signals.
\newblock {\em Information and Inference}, page iau002, 2014.

\bibitem[BCMN14]{Band}
Afonso~S. Bandeira, Jameson Cahill, Dustin~G. Mixon, and Aaron~A. Nelson.
\newblock Saving phase: Injectivity and stability for phase retrieval.
\newblock {\em Applied and Computational Harmonic Analysis}, 37(1):106--125,
  2014.

\bibitem[BDP{\etalchar{+}}07]{Bunk}
Oliver Bunk, Ana Diaz, Franz Pfeiffer, Christian David, Bernd Schmitt,
  Dillip~K. Satapathy, and J.~Friso van~der Veen.
\newblock Diffractive imaging for periodic samples: retrieving one-dimensional
  concentration profiles across microfluidic channels.
\newblock {\em Acta Crystallographica Section A: Foundations of
  Crystallography}, 63(4):306--314, 2007.

\bibitem[BF16]{bojarovska2016phase}
Irena Bojarovska and Axel Flinth.
\newblock Phase retrieval from {G}abor measurements.
\newblock {\em Journal of Fourier Analysis and Applications}, 22(3):542--567,
  2016.

\bibitem[CEHV15]{conca}
Aldo Conca, Dan Edidin, Milena Hering, and Cynthia Vinzant.
\newblock An algebraic characterization of injectivity in phase retrieval.
\newblock {\em Applied and Computational Harmonic Analysis}, 38(2):346--356,
  2015.

\bibitem[GKR20]{grohs2020phase}
Philipp Grohs, Sarah Koppensteiner, and Martin Rathmair.
\newblock Phase retrieval: uniqueness and stability.
\newblock {\em SIAM Review}, 62(2):301--350, 2020.

\bibitem[HB]{harsha}
Prahladh Harsha and Adam Barth.
\newblock Lecture 5: Derandomization ({P}art {I}{I}).
\newblock Available at
  \url{http://www.tcs.tifr.res.in/~prahladh/teaching/05spring/lectures/lec5.pdf}.

\bibitem[HHLW22]{han2022quantum}
Deguang Han, Qianfeng Hu, Rui Liu, and Heying Wang.
\newblock Quantum injectivity of multi-window gabor frames in finite
  dimensions.
\newblock {\em Annals of Functional Analysis}, 13(4):59, 2022.

\bibitem[HLV18]{hand2018phase}
Paul Hand, Oscar Leong, and Vlad Voroninski.
\newblock Phase retrieval under a generative prior.
\newblock {\em Advances in Neural Information Processing Systems}, 31, 2018.

\bibitem[LCH{\etalchar{+}}17]{li2017phase}
Lan Li, Cheng Cheng, Deguang Han, Qiyu Sun, and Guangming Shi.
\newblock Phase retrieval from multiple-window short-time fourier measurements.
\newblock {\em IEEE Signal Processing Letters}, 24(4):372--376, 2017.

\bibitem[LPW06]{pfander1}
James Lawrence, G\text{\"{o}}tz~E. Pfander, and David Walnut.
\newblock Linear independence of {G}abor systems in finite dimensional vector
  spaces.
\newblock {\em Journal of Fourier Analysis and Applications}, 11(6):pp.
  715--726, 2006.

\bibitem[Mal15]{malik}
Romanos-Diogenes Malikiosis.
\newblock A note on {G}abor frames in finite dimensions.
\newblock {\em Applied and Computational Harmonic Analysis}, 38(2):318--330,
  2015.

\bibitem[Mil90]{Mill}
Rick~P. Millane.
\newblock Phase retrieval in crystallography and optics.
\newblock {\em JOSA A}, 7(3):394--411, 1990.

\bibitem[PS19]{pfander2019robust}
Götz~E Pfander and Palina Salanevich.
\newblock Robust phase retrieval algorithm for time-frequency structured
  measurements.
\newblock {\em SIAM journal on imaging sciences}, 12(2):736--761, 2019.

\bibitem[RJ93]{rabiner1993fundamentals}
Lawrence Rabiner and Biing-Hwang Juang.
\newblock {\em Fundamentals of speech recognition}.
\newblock Prentice-Hall, Inc., 1993.

\bibitem[Rod08]{Roden}
John~M. Rodenburg.
\newblock Ptychography and related diffractive imaging methods.
\newblock {\em Advances in Imaging and Electron Physics}, 150(07):87--184,
  2008.

\bibitem[Sal19]{salanevich2019stability}
Palina Salanevich.
\newblock Stability of phase retrieval problem.
\newblock In {\em 2019 13th International Conference on Sampling Theory and
  Applications (SampTA)}, pages 1--4. IEEE, 2019.

\bibitem[Sin11]{singer}
Amit Singer.
\newblock Angular synchronization by eigenvectors and semidefinite programming.
\newblock {\em Applied and computational harmonic analysis}, 30(1):20--36,
  2011.

\bibitem[TV06]{tao}
Terence Tao and Van~H. Vu.
\newblock {\em Additive combinatorics}, volume~13.
\newblock Cambridge University Press, 2006.

\end{thebibliography}

\end{document}